\newcommand{\Rm}{\mathbb{R}}
\newcommand{\Cm}{\mathbb{C}}
\newcommand{\Sm}{\mathbb{S}}
\newcommand{\ba}{\begin{eqnarray*}}
\newcommand{\ea}{\end{eqnarray*}}
\newcommand{\be}{\begin{equation}}
\newcommand{\ee}{\end{equation}}
\newcommand{\bea}{\begin{eqnarray}}
\newcommand{\eea}{\end{eqnarray}}
\newcommand{\va}{\varphi}
\newcommand{\vv}[1]{\boldsymbol{\mathrm{#1}}}
\newcommand{\hvv}[1]{\boldsymbol{\hat{\mathrm{#1}}}}
\newcommand{\uv}{\boldsymbol{{\hat{\Omega}}}}
\newcommand{\uvk}{\boldsymbol{\hat{\mathrm{k}}}}
\newcommand{\pint}{\:\mathcal{P}\!\!\int}
\newcommand{\rrf}[1]{\mathop{\mathcal{R}_{{#1}}}}
\newcommand{\irrf}[1]{\mathop{\mathcal{R}_{{#1}}^{-1}}}
\newtheorem{thm}{Theorem}[section]
\theoremstyle{remark}
\title[]{Three-dimensional transport theory via one-dimensional transport 
theory}
\author[]{Manabu Machida}
\address{Department of Mathematics, University of Michigan}
\email{mmachida@umich.edu}
\thanks{The manuscript was prepared for the special issues of Journal of 
Computational and Theoretical Transport for the 24th International Conference 
on Transport Theory.}
\begin{document}

\begin{abstract}
In linear transport theory, three-dimensional equations reduce to 
one-dimensional equations by means of rotated reference frames. In this paper, 
we illustrate how the technique works and three-dimensional transport theories 
are obtained.
\end{abstract}

\maketitle

\section{Introduction}

In linear transport theory, three-dimensional equations reduce to 
one-dimensional equations with rotated reference frames when the angular 
flux has the structure of separation of variables. Although the technique can 
be used for anisotropic scattering in the presence of boundaries, calculation 
becomes complicated in such general cases and sometimes the essence 
is buried in straightforward but lengthy and tedious calculations. 
In this paper, aiming at elucidating how three-dimensional transport theory 
can be developed using one-dimensional transport theory, we will consider 
the case of isotropic scattering in an infinite medium.

To the best of the author's knowledge, rotated reference frames were first 
introduced in transport theory by Dede \cite{Dede64} in the context of the 
$P_N$ method, and then Kobayashi \cite{Kobayashi77} expanded Dede's 
calculation. Forty years after Dede's finding, Markel \cite{Markel04} devised 
an efficient numerical algorithm of computing solutions to the 
three-dimensional transport equation by reinventing rotated reference 
frames. The method is called the method of rotated reference frames. A lot of 
numerical calculations \cite{Panasyuk06,Xu06a,Xu06b,Machida10,LK11JPA,LK11PRA,
LK11PRE,LK12BOE,LK12JOSA,LK12JPA,LK12JQSRT,LK12PRE,LK13ANE,LK13SR,LK13WRCM,
LK15AO} done during the last decade 
have proved the usefulness and efficiency of this method. It was then found 
that the technique of rotated reference frames is not merely for the 
particular numerical method but is a tool to build bridges between 
three-dimensional transport theory and one-dimensional transport theory. 
Case's method \cite{Case60,Mika61,McCormick-Kuscer66,Case-Zweifel} was 
extended to three dimensions \cite{Machida14}, and the $F_N$ method 
\cite{Siewert78,Siewert-Benoist79} was extended to three dimensions 
\cite{Machida15a}. Recently, the angular flux of the three-dimensional 
transport equation with anisotropic scattering was computed using the Fourier 
transform by making use of rotated reference frames \cite{Machida15b}. 
The technique of rotated reference frames is also applied to 
optical tomography \cite{Schotland-Markel07,MachidaXX,Machida15c}

Let us consider the following transport equation in 
a three-dimensional infinite medium.
\be
\left(\uv\cdot\nabla+1\right)\psi(\vv{r},\uv)
=\frac{\varpi}{4\pi}\int_{\Sm^2}\psi(\vv{r},\uv)\,d\uv+S(\vv{r},\uv),
\label{rte}
\ee
where $\vv{r}\in\Rm^3$, $\uv\in\Sm^2$, and $\varpi\in(0,1)$ is the albedo for 
single scattering. Here, $\psi(\vv{r},\uv)$ is the angular flux and 
$S(\vv{r},\uv)$ is the source term. The unit vector $\uv$ has the polar angle 
$\theta$ and azimuthal angle $\va$. We let $\mu$ denote the cosine of 
$\theta$, i.e., $\mu=\cos\theta$.

Let $\uvk\in\Cm^3$ be a unit vector such that $\uvk\cdot\uvk=1$. For a 
function $f(\uv)$, we introduce an operator $\rrf{\uvk}$ such that 
$\rrf{\uvk}f(\uv)$ is the value of $f$ in which $\uv$ is measured in the 
reference frame whose $z$-axis lies in the direction of $\uvk$. For example, 
when $f(\uv)=\mu$, we have
\[
\rrf{\uvk}\mu=\uvk\cdot\uv.
\]
Such reference frames which are rotated in the direction of a given unit 
vector $\uvk$ are called rotated reference frames. In general, we can compute 
$\rrf{\uvk}f(\uv)$ using Wigner's $D$-matrices \cite{Varshalovich} as
\[
\rrf{\uvk}f(\uv)=\sum_{l=0}^{\infty}\sum_{m=-l}^lf_{lm}\sum_{m'=-l}^l
D_{m'm}^l\left(\va_{\uvk},\theta_{\uvk},0\right)Y_{lm'}(\uv),
\]
where $\theta_{\uvk}$ and $\va_{\uvk}$ are the polar and azimuthal angles of 
$\uvk$ in the laboratory frame, $D_{m'm}^l$ are Wigner's $D$-matrices, and 
\[
f_{lm}=\int_{\Sm^2}f(\uv)Y_{lm}^*(\uv)\,d\uv.
\]

\section{Case's method in three dimensions}
\label{case}

For a unit vector $\uvk$, we assume the following separated solutions with 
separation parameter $\nu$. 
\be
\psi_{\nu}(\vv{r},\uv;\uvk)=\Phi_{\nu}(\uv;\uvk)e^{-\uvk\cdot\vv{r}/\nu},
\label{case:eigen}
\ee
where the unknown function $\Phi_{\nu}(\uv;\uvk)$ is normalized as
\[
\frac{1}{2\pi}\int_{\Sm^2}\Phi_{\nu}(\uv;\uvk)\,d\uv=1.
\]
We plug the above $\psi_{\nu}(\vv{r},\uv;\uvk)$ into the homogeneous equation
\[
\left(\uv\cdot\nabla+1\right)\psi(\vv{r},\uv)
=\frac{\varpi}{4\pi}\int_{\Sm^2}\psi(\vv{r},\uv)\,d\uv,
\]
and obtain
\be
\left(1-\frac{\uvk\cdot\uv}{\nu}\right)\Phi_{\nu}(\uv;\uvk)=\frac{\varpi}{2}.
\label{case:rrf1}
\ee
Let us express $\Phi_{\nu}(\uv;\uvk)$ as
\[
\Phi_{\nu}(\uv;\uvk)=\rrf{\uvk}\phi(\nu,\mu).
\]
We will see below that $\phi(\nu,\mu)$ is independent of $\va$. Equation 
(\ref{case:rrf1}) can be rewritten as
\be
\left(1-\frac{\rrf{\uvk}\mu}{\nu}\right)\rrf{\uvk}\phi(\nu,\mu)
=\frac{\varpi}{2}.
\label{case:rrf2}
\ee
Since the right-hand side of (\ref{case:rrf2}) is a scalar, by operating 
$\irrf{\uvk}$, (\ref{case:rrf2}) reduces to
\be
\left(1-\frac{\mu}{\nu}\right)\phi(\nu,\mu)=\frac{\varpi}{2}.
\label{case:rrf3}
\ee
The above (\ref{case:rrf3}) is the equation appearing in one-dimensional 
transport theory \cite{Case60,Case-Zweifel}. Therefore it turns out that 
$\phi(\nu,\mu)$ are singular eigenfunctions, which are obtained as
\[
\phi(\nu,\mu)=\frac{\varpi\nu}{2}\mathcal{P}\frac{1}{\nu-\mu}
+\lambda(\nu)\delta(\nu-\mu),
\]
where
\[
\lambda(\nu)=1-\frac{\varpi\nu}{2}\pint_{-1}^1\frac{1}{\nu-\mu}\,d\mu
=1-\varpi\nu\tanh^{-1}(\nu).
\]
The separation constant $\nu$ takes values in $(-1,1)$ in addition to 
$\pm\nu_0$, where $\nu_0>1$ is the positive root of $\Lambda(\nu)$ 
such that $\Lambda(\nu_0)=0$. Here, the function $\Lambda(w)$ is defined 
for $w\in\Cm\setminus[-1,1]$ as
\be
\Lambda(w)=1-\frac{\varpi w}{2}\int_{-1}^1\frac{1}{w-\mu}\,d\mu.
\label{bigLambda}
\ee
Thus $\nu_0$ is given as the positive solution to the transcendental 
equation
\[
1-\varpi\nu_0\tanh^{-1}\left(\frac{1}{\nu_0}\right)=0.
\]
When $\varpi$ is near $1$, which is typical for light propagating in 
biological tissue, $\nu_0$ is approximately calculated as \cite{Case-Zweifel} 
\[
\nu_0\approx\frac{1}{\sqrt{3(1-\varpi)}}.
\]
Now we return to eigenmodes (\ref{case:eigen}) in three dimensions. We obtain
\[
\psi_{\nu}(\vv{r},\uv;\uvk)=\rrf{\uvk}\phi(\nu,\mu)e^{-\uvk\cdot\vv{r}/\nu}
=\phi(\nu,\uvk\cdot\uv)e^{-\uvk\cdot\vv{r}/\nu}.
\]
The angular flux in (\ref{rte}) is given by a superposition of eigenmodes 
$\psi_{\nu}(\vv{r},\uv;\uvk)$.

So far, the unit vector $\uvk$ is arbitrary. Hereafter we assume that $\uvk$ 
has the form
\[
\uvk=\uvk(\nu,\vv{q})=
\left(\begin{array}{c}-i\nu\vv{q}\\\hat{k}_z(\nu q)\end{array}\right),
\]
where $\vv{q}\in\Rm^2$ and $q=|\vv{q}|$. For this $\uvk$ we have
\ba
\rrf{\uvk(\nu,\vv{q})}\mu&=&
\hat{k}_z(\nu q)\mu-i\nu q\sqrt{1-\mu^2}\cos(\va-\va_{\vv{q}}),
\\
\irrf{\uvk(\nu,\vv{q})}\mu&=&
\hat{k}_z(\nu q)\mu-i|\nu q|\sqrt{1-\mu^2}\cos\va.
\ea

\begin{thm}[Orthogonality relations \cite{Machida15a}]
\label{case:orth}
For three-dimensional singular eigenfunctions we have
\[
\int_{\Sm^2}\mu\left[\rrf{\uvk(\nu,\vv{q})}\phi(\nu,\mu)\right]
\left[\rrf{\uvk(\nu',\vv{q})}\phi(\nu',\mu)\right]\,d\uv
=2\pi\hat{k}_z(\nu q)\mathcal{N}(\nu)\delta_{\nu\nu'}.
\]
Here the Kronecker delta $\delta_{\nu\nu'}$ is read as the Dirac delta 
function $\delta(\nu-\nu')$ when $\nu,\nu'$ are in the continuous spectrum 
$(-1,1)$. The normalization factor $\mathcal{N}(\nu)$ is from one-dimensional 
transport theory and given by
\[
\mathcal{N}(\nu)=\left\{\begin{aligned}
\frac{\varpi}{2}\nu^3\left(\frac{\varpi}{\nu^2-1}-\frac{1}{\nu^2}\right),
&\quad \nu=\pm\nu_0,
\\
\nu\left[\left(1-\varpi\nu\tanh^{-1}(\nu)\right)^2
+\left(\frac{\varpi \pi\nu}{2}\right)^2\right],
&\quad\nu\in(-1,1).
\end{aligned}\right.
\]
\end{thm}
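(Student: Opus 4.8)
The plan is to reduce the three-dimensional integral to the corresponding one-dimensional orthogonality relation of Case's method, from which the normalization factor $\mathcal{N}(\nu)$ is inherited. Recall that in one dimension the singular eigenfunctions $\phi(\nu,\mu)$ satisfy the full-range orthogonality relation
\[
\int_{-1}^1\mu\,\phi(\nu,\mu)\phi(\nu',\mu)\,d\mu=\mathcal{N}(\nu)\delta_{\nu\nu'},
\]
with exactly the $\mathcal{N}(\nu)$ in the statement \cite{Case60,Case-Zweifel}. Thus it suffices to show that the angular integral over $\Sm^2$ reproduces this one-dimensional pairing multiplied by the geometric factor $2\pi\hat{k}_z(\nu q)$.

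First I would rewrite each rotated eigenfunction using the identity $\rrf{\uvk}\phi(\nu,\mu)=\phi(\nu,\uvk\cdot\uv)$ established above, so that the integrand becomes $\mu\,\phi(\nu,\uvk(\nu,\vv{q})\cdot\uv)\,\phi(\nu',\uvk(\nu',\vv{q})\cdot\uv)$. Since $\vv{q}$ is fixed, I would choose the laboratory frame so that $\va_{\vv{q}}=0$ and write $d\uv=d\mu\,d\va$, with $\uvk(\nu,\vv{q})\cdot\uv=\hat{k}_z(\nu q)\mu-i\nu q\sqrt{1-\mu^2}\cos\va$. The $2\pi$ in the statement signals that the azimuthal integration should be carried out first, while the $\hat{k}_z(\nu q)$ is the Jacobian arising when the polar variable is mapped onto the one-dimensional direction cosine.

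The main computational engine, however, exploits the operator structure directly: because $\rrf{\uvk}$ distributes over products and the spherical measure is invariant under the (complexified) rotation $\irrf{\uvk(\nu,\vv{q})}$, I would apply $\irrf{\uvk(\nu,\vv{q})}$ to the whole integrand. This turns the weight into $\irrf{\uvk(\nu,\vv{q})}\mu=\hat{k}_z(\nu q)\mu-i|\nu q|\sqrt{1-\mu^2}\cos\va$, returns the first factor to the bare $\phi(\nu,\mu)$, and leaves the second as $\phi(\nu',\cdot)$ evaluated in the residual rotation relating $\uvk(\nu',\vv{q})$ to $\uvk(\nu,\vv{q})$. Integrating in $\va$ then isolates the $\mu$-dependent piece and, once the delta-function part of $\phi$ localizes the integral, collapses it onto $\int_{-1}^1\mu\,\phi(\nu,\mu)\phi(\nu',\mu)\,d\mu$ up to the factor $2\pi\hat{k}_z(\nu q)$.

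I expect the main obstacle to be the rigorous handling of the products of singular eigenfunctions when their arguments $\uvk\cdot\uv$ are complex. The principal-value-times-principal-value and principal-value-times-delta cross terms must be disentangled by the Poincar\'{e}--Bertrand formula exactly as in the one-dimensional theory, but now with contours and delta functions that live off the real axis and are defined by analytic continuation of the one-dimensional identities in $\nu$. Care is also needed because the off-diagonal vanishing (for $\nu\neq\nu'$) and the diagonal normalization (the full $\delta(\nu-\nu')$ on the continuum together with the discrete $\pm\nu_0$ contributions) must emerge together; keeping $\nu$ and $\nu'$ distinct until the azimuthal integration forces their coincidence is the delicate step on which the whole reduction hinges.
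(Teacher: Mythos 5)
Your diagonal computation ($\nu=\nu'$) coincides with the paper's: rotate the integrand by $\irrf{\uvk(\nu,\vv{q})}$, observe that $\phi(\nu,\mu)^2$ is then azimuthally symmetric so only the weight $\irrf{\uvk}\mu=\hat{k}_z(\nu q)\mu-i|\nu q|\sqrt{1-\mu^2}\cos\va$ carries $\va$-dependence, integrate out the $\cos\va$ term, and invoke $\mathcal{N}(\nu)=\int_{-1}^1\mu\,\phi(\nu,\mu)^2\,d\mu$ from one-dimensional theory.

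The gap is in the off-diagonal case, which is where the real content lies. After you apply $\irrf{\uvk(\nu,\vv{q})}$, the second factor becomes $\phi(\nu',w)$ where $w$ is the image of $\uvk(\nu',\vv{q})\cdot\uv$ under the residual rotation; $w$ still depends on $\va$, and $\frac{1}{2\pi}\int_0^{2\pi}\phi(\nu',w(\mu,\va))\,d\va$ is \emph{not} $\phi(\nu',\mu)$. Indeed, for $q\neq 0$ the argument $w$ is complex for almost every $\uv$, so the $\delta$-part of $\phi$ never fires at all, and the azimuthal average of $(\nu'-w)^{-1}$ is a genuinely different function of $\mu$ from the one-dimensional kernel. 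Your claimed collapse onto $\int_{-1}^1\mu\,\phi(\nu,\mu)\phi(\nu',\mu)\,d\mu$ therefore has no mechanism behind it, and the Poincar\'{e}--Bertrand analysis you anticipate is not the missing ingredient. The paper instead uses the classical two-eigenvalue subtraction trick: multiply $\left(1-\uvk_1\cdot\uv/\nu_1\right)\rrf{\uvk_1}\phi(\nu_1,\mu)=\varpi/2$ by $\rrf{\uvk_2}\phi(\nu_2,\mu)$ and vice versa, integrate over $\Sm^2$, and subtract, so that the scalar right-hand sides cancel. The crucial structural fact --- which your proposal never uses --- is that $\uvk(\nu,\vv{q})/\nu=\left(-i\vv{q},\hat{k}_z(\nu q)/\nu\right)$ has transverse components independent of $\nu$; hence $\uvk_2\cdot\uv/\nu_2-\uvk_1\cdot\uv/\nu_1=\left(\hat{k}_z(\nu_2q)/\nu_2-\hat{k}_z(\nu_1q)/\nu_1\right)\mu$ is proportional to $\mu$ alone, with a nonvanishing prefactor when $\nu_1\neq\nu_2$. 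That is what makes the weight $\mu$ appear and yields the off-diagonal vanishing without ever having to disentangle products of singular distributions.
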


\begin{proof}
For fixed $\vv{q}$, we consider two eigenvalues $\nu_1$ and $\nu_2$. 
Correspondingly, we write $\uvk_1=\uvk(\nu_1,\vv{q})$ and 
$\uvk_2=\uvk(\nu_2,\vv{q})$. We multiply (\ref{case:rrf1}) for $\nu_1$ by 
$\rrf{\uvk_2}\phi(\nu_2,\mu)$ and multiply (\ref{case:rrf1}) for $\nu_2$ by 
$\rrf{\uvk_1}\phi(\nu_1,\mu)$. We have
\ba
\left[\rrf{\uvk_2}\phi(\nu_2,\mu)\right]
\left(1-\frac{\uvk_1\cdot\uv}{\nu_1}\right)\rrf{\uvk_1}\phi(\nu_1,\mu)
&=&\frac{\varpi}{2}\rrf{\uvk_2}\phi(\nu_2,\mu),
\\
\left[\rrf{\uvk_1}\phi(\nu_1,\mu)\right]
\left(1-\frac{\uvk_2\cdot\uv}{\nu_2}\right)\rrf{\uvk_2}\phi(\nu_2,\mu)
&=&\frac{\varpi}{2}\rrf{\uvk_1}\phi(\nu_1,\mu).
\ea
By integrating both sides and subtracting the latter from former, we obtain
\[
\int_{\Sm^2}\left(\frac{\rrf{\uvk_2}\mu}{\nu_2}-\frac{\rrf{\uvk_1}\mu}{\nu_1}
\right)\left[\rrf{\uvk_1}\phi(\nu_1,\mu)\right]
\left[\rrf{\uvk_2}\phi(\nu_2,\mu)\right]\,d\uv=0.
\]
Therefore,
\[
\int_{\Sm^2}\mu\left[\rrf{\uvk_1}\phi(\nu_1,\mu)\right]
\left[\rrf{\uvk_2}\phi(\nu_2,\mu)\right]\,d\uv=0,\qquad\nu_1\neq\nu_2.
\]
When $\nu_1=\nu_2=\nu$, we have
\[
\int_{\Sm^2}\mu\left[\rrf{\uvk}\phi(\nu,\mu)\right]^2\,d\uv
=\int_{\Sm^2}\left[\irrf{\uvk}\mu\right]\phi(\nu,\mu)^2\,d\uv
=2\pi\hat{k}_z(\nu q)\int_{-1}^1\mu\phi(\nu,\mu)^2\,d\mu.
\]
The proof is completed by noticing \cite{Case-Zweifel} 
$\mathcal{N}(\nu)=\int_{-1}^1\mu\phi(\nu,\mu)^2\,d\mu$.
\end{proof}

By using three-dimensional singular eigenfunctions, let us compute the 
Green's function. When the source term in (\ref{rte}) is given as
\[
S(\vv{r},\uv)=\delta(\vv{r})\delta(\uv-\uv_0),
\]
the angular flux becomes the Green's function:
\[
G(\vv{r},\uv;\uv_0)=\psi(\vv{r},\uv).
\]
Let us calculate $G(\vv{r},\uv;\uv_0)$. Since the Green's function vanishes 
at infinity, i.e., $G(\vv{r},\uv;\uv_0)\to0$ as $|\vv{r}|\to\infty$, we 
can write $G(\vv{r},\uv;\uv_0)$ in terms of eigenmodes as
\[
G(\vv{r},\uv;\uv_0)=\left\{\begin{aligned}
\frac{1}{(2\pi)^2}\int_{\Rm^2}\left[a_+(\vv{q})\psi_{\nu_0}(\vv{r},\uv;\uvk)
+\int_0^1A_{\nu}(\vv{q})\psi_{\nu}(\vv{r},\uv;\uvk)\,d\nu\right]\,d\vv{q},
\quad z>0,
\\
\frac{-1}{(2\pi)^2}\int_{\Rm^2}\left[a_-(\vv{q})\psi_{-\nu_0}(\vv{r},\uv;\uvk)
+\int_{-1}^0A_{\nu}(\vv{q})\psi_{\nu}(\vv{r},\uv;\uvk)\,d\nu\right]\,d\vv{q},
\quad z<0,
\end{aligned}\right.
\]
where coefficients $a_{\pm}(\vv{q})$, $A_{\nu}(\vv{q})$ are determined below. 
We let $\vv{\rho}\in\Rm^2$ be the position vector in the plane perpendicular 
to the $z$-axis, i.e., $\vv{r}=(\vv{\rho},z)$ and $\vv{\rho}=(x,y)$. The jump 
condition is written as
\[
G(\vv{\rho},0^+,\uv;\uv_0)-G(\vv{\rho},0^-,\uv;\uv_0)
=\frac{1}{\mu}\delta(\vv{\rho})\delta(\uv-\uv_0).
\]
By using the above-mentioned jump condition and orthogonality relations, 
we can determine the coefficients as
\ba
a_{\pm}(\vv{q})&=&
\frac{1}{2\pi\hat{k}_z(\nu_0q)\mathcal{N}(\pm\nu_0)}
\rrf{\uvk(\pm\nu_0,\vv{q})}\phi(\pm\nu_0,\mu_0),
\\
A_{\nu}(\vv{q})&=&
\frac{1}{2\pi\hat{k}_z(\nu q)\mathcal{N}(\nu)}
\rrf{\uvk(\nu,\vv{q})}\phi(\nu,\mu_0).
\ea
Finally, the Green's function is obtained as \cite{Machida14}
\bea
G(\vv{r},\uv;\uv_0)
&=&
\frac{1}{(2\pi)^3}\int_{\Rm^2}e^{i\vv{q}\cdot\vv{\rho}}
\nonumber \\
&\times&
\Biggl\{
\frac{1}{\hat{k}_z(\nu_0q)\mathcal{N}(\nu_0)}
\left[\rrf{\uvk(\pm\nu_0,\vv{q})}\phi(\pm\nu_0,\mu)\right]
\left[\rrf{\uvk(\pm\nu_0,\vv{q})}\phi(\pm\nu_0,\mu_0)\right]
e^{\mp\hat{k}_z(\nu_0q)z/\nu_0}
\nonumber \\
&+&
\int_0^1\frac{1}{\hat{k}_z(\nu q)\mathcal{N}(\nu)}
\left[\rrf{\uvk(\pm\nu,\vv{q})}\phi(\pm\nu,\mu)\right]
\left[\rrf{\uvk(\pm\nu,\vv{q})}\phi(\pm\nu,\mu_0)\right]
e^{\mp\hat{k}_z(\nu q)z/\nu}\,d\nu
\Biggr\}\,d\vv{q},
\nonumber \\
\label{greenfunc}
\eea
where upper signs are used for $z>0$ and lower signs are chosen for $z<0$. 
We note that
\[
\rrf{\uvk(\nu,\vv{q})}\phi(\nu,\mu)
=\phi\left(\nu,\uvk(\nu,\vv{q})\cdot\uv\right).
\]

It is possible to similarly treat the case of anisotropic scattering 
\cite{Machida14}.

\section{Ganapol's Fourier-transform approach}
\label{fouriertransform}

The Green's function (\ref{greenfunc}) can also be obtained with the Fourier 
transform. We obtain \cite{Case-Zweifel,Ishimaru78}
\be
G(\vv{r},\uv;\uv_0)=
\frac{1}{r^2}e^{-r}\delta\left(\uv-\frac{\vv{r}}{r}\right)(\uv-\uv_0)
+\frac{\varpi}{2(2\pi)^4}\int_{\Rm^3}e^{i\vv{k}\cdot\vv{r}}
\frac{\left[1-\frac{\varpi}{k}\tan^{-1}(k)\right]^{-1}}
{(1+i\vv{k}\cdot\uv)(1+i\vv{k}\cdot\uv_0)}\,d\vv{k},
\label{g:conv}
\ee
where $r=|\vv{r}|$ and $k=|\vv{k}|$. The extension of (\ref{g:conv}) to 
anisotropic scattering is also possible by using rotated reference frames 
\cite{Machida15b}. In one dimension, Ganapol has developed 
an alternative Fourier-transform approach \cite{Ganapol00,Ganapol15}, which is 
different from the conventional derivation that yields the one-dimensional 
version of (\ref{g:conv}). The new formula is potentially more suitable for 
numerical calculation. Here we compute the three-dimensional Green's function 
using Ganapol's approach.

In this section we particularly set the source term as
\[
S(\vv{r},\uv)=\delta(\vv{r})\delta(\uv-\hvv{z}).
\]
The angular flux or the Green's function is then symmetric about the azimuthal 
angle. Using the Fourier transform of the Green's function
\[
\tilde{G}(\vv{k},\uv;\hvv{z})=
\int_{\Rm^3}e^{-i\vv{k}\cdot\vv{r}}G(\vv{r},\uv;\hvv{z})\,d\vv{r},
\]
we can rewrite the transport equation in the Fourier space as
\be
\left(1+i\vv{k}\cdot\uv\right)\tilde{G}(\vv{k},\uv;\hvv{z})
=\frac{\varpi}{4\pi}\int_{\Sm^2}\tilde{G}(\vv{k},\uv;\hvv{z})\,d\uv
+\delta(\uv-\hvv{z}).
\label{g:rtekspace}
\ee
Note that
\[
G(\vv{r},\uv;\hvv{z})
=\frac{1}{(2\pi)^3}
\int_{\Rm^3}\tilde{G}(\vv{k},\uv;\hvv{z})e^{i\vv{k}\cdot\vv{r}}\,d\vv{k},
\]
has the structure of separation of variables in the sense that $\tilde{G}$ 
depends on $\uv$ and $\vv{r}$ exists only in $e^{i\vv{k}\cdot\vv{r}}$. 
Let us define
\be
\tilde{\psi}_l(\vv{k})=\int_{\Sm^2}
\left[\rrf{\uvk}P_l(\mu)\right]\tilde{G}(\vv{k},\uv;\hvv{z})\,d\uv.
\label{g:moment}
\ee
We will use a new variable
\[
z=\frac{i}{k}.
\]
Noting the recurrence relation
\[
(2l+1)\mu P_l(\mu)=(l+1)P_{l+1}(\mu)+lP_{l-1}(\mu),
\]
we obtain
\be
zh_l\tilde{\psi}_l(\vv{k})-(l+1)\tilde{\psi}_{l+1}(\vv{k})
-l\tilde{\psi}_{l-1}(\vv{k})=zS_l(\uvk),
\label{psirecurr}
\ee
where
\[
h_l=2l+1-\varpi\delta_{l0},
\]
and
\[
S_l(\uvk)=(2l+1)\left.\rrf{\uvk}P_l(\mu_0)\right|_{\uv_0=\hvv{z}}.
\]
Chandrasekhar polynomials of the first and second kinds are defined as
\[
zh_lg_l(z)-(l+1)g_{l+1}(z)-lg_{l-1}(z)=0,\quad
g_0(z)=1,\quad g_1(z)=z(1-\varpi),
\]
and
\[
zh_l\rho_l(z)-(l+1)\rho_{l+1}(z)-l\rho_{l-1}(z)=0,\quad
\rho_0(z)=0,\quad \rho_1(z)=z.
\]
We can express $\tilde{\psi}_l$ as
\be
\tilde{\psi}_l=a(z)g_l(z)+b(z)\rho_l(z)+\left(1-\delta_{l0}\right)
z\sum_{j=1}^l\alpha_{l,j}(z)S_j.
\label{Eq11}
\ee
By setting $l=0$ in (\ref{Eq11}), we first notice that
\[
a(z)=\tilde{\psi}_0.
\]
By plugging (\ref{Eq11}) into (\ref{psirecurr}) we have
\[
-\left[b(z)\rho_1(z)+z\alpha_{1,1}(z)S_1\right]=z.
\]
Suppose $l>0$. Let us impose
\be
zh_l\alpha_{l,j}-(l+1)\alpha_{l+1,j}-l\alpha_{l-1,j}=0.
\label{Eq14c}
\ee
By substituting (\ref{Eq11}) for $\tilde{\psi}_l$ in (\ref{psirecurr}), we 
obtain
\[
zh_l\alpha_{l,l}S_l-(l+1)\left(\alpha_{l+1,l}S_l
+\alpha_{l+1,l+1}S_{l+1}\right)=S_l.
\]
The left-hand side of the above equation can be rewritten as
\[
\mbox{LHS}=-(l+1)\alpha_{l+1,l+1}S_{l+1}+l\alpha_{l-1,l}S_l.
\]
Hence we can put
\be
\alpha_{l-1,l}=\frac{1}{l},\qquad\alpha_{l+1,l+1}=0.
\label{Eq15}
\ee
Thus we find
\[
b(z)=-1.
\]
To find $\alpha_{l,j}(z)$, let us plug the expression 
$\alpha_{l,j}=u_jg_l+v_j\rho_l$ into $\alpha_{l,l}=0$ and 
$\alpha_{l-1,l}=1/l$. We obtain
\[
u_l=\frac{\rho_l}{l[g_{l-1}\rho_l-g_l\rho_{l-1}]},\qquad
v_l=\frac{-g_l}{l[g_{l-1}\rho_l-g_l\rho_{l-1}]}.
\]
We subtract 
$\rho_l[zh_lg_l-(l+1)g_{l+1}-lg_{l-1}]=0$ from 
$g_l[zh_l\rho_l-(l+1)\rho_{l+1}-l\rho_{l-1}]=0$, and obtain
\ba
(l+1)\left[g_l(z)\rho_{l+1}(z)-g_{l+1}(z)\rho_l(z)\right]
&=&l\left[g_{l-1}(z)\rho_l(z)-g_l(z)\rho_{l-1}(z)\right]
\\
&=&
(l-1)\left[g_{l-2}(z)\rho_{l-1}(z)-g_{l-1}(z)\rho_{l-2}(z)\right]
\\
&=&
g_0(z)\rho_1(z)-g_1(z)\rho_0(z)=z.
\ea
Thus we obtain
\[
z\alpha_{l,j}(z)=\rho_j(z)g_l(z)-g_j(z)\rho_l(z).
\]
Finally, (\ref{Eq11}) becomes
\be
\tilde{\psi}_l=g_l(z)\tilde{\psi}_0-\chi_l(\vv{k}),
\label{Eq22a}
\ee
where
\be
\chi_l(\vv{k})=\left(1-\delta_{l0}\right)\sum_{j=1}^l
\left[\rho_l(z)g_j(z)-g_l(z)\rho_j(z)\right]S_j(\uvk)+\rho_l(z).
\label{defchi}
\ee

To find the initial term $\tilde{\psi}_0$, we return to (\ref{g:rtekspace}). 
We obtain
\[
\tilde{G}(\vv{k},\uv;\hvv{z})=
\frac{\varpi}{4\pi}\frac{\tilde{\psi}_0(\vv{k})}{1+i\vv{k}\cdot\uv}
+\frac{1}{1+i\vv{k}\cdot\uv}\delta(\uv-\hvv{z}).
\]
Thus we have
\[
\tilde{\psi}_l(\vv{k})=
\frac{\varpi}{2}\tilde{\psi}_0(\vv{k})\int_{-1}^1\frac{P_l(\mu)}{1+ik\mu}\,d\mu
+\frac{1}{1+i\vv{k}\cdot\hvv{z}}
\left.\rrf{\uvk}P_l(\mu_0)\right|_{\uv_0=\hvv{z}}.
\]
Setting $l=0$, we have
\[
\left[1-\varpi L_0(z)\right]\tilde{\psi}_0(\vv{k})
=\frac{1}{1+i\vv{k}\cdot\hvv{z}},
\]
where
\[
L_l(z)=\frac{1}{2}\int_{-1}^1\frac{P_l(\mu)}{1+ik\mu}\,d\mu
=\frac{z}{2}\int_{-1}^1\frac{P_l(\mu)}{z-\mu}\,d\mu=zQ_l(z).
\]
Here, $Q_l(z)$ is the Legendre function of the second kind which has a branch 
cut from $-\infty$ to $1$. We obtain
\be
\tilde{\psi}_0=\frac{1}{\Lambda(z)}\frac{z}{z-\uvk\cdot\hvv{z}},
\label{Eq23a}
\ee
where we used
\[
1-\varpi L_0(z)=1-\varpi zQ_0(z)
=1-\frac{\varpi z}{2}\ln\frac{z+1}{z-1}
=1-\varpi z\tanh^{-1}\left(\frac{1}{z}\right)
=\Lambda(z).
\]
The function $\Lambda(z)$ is defined in (\ref{bigLambda}). 
We can calculate $\tilde{\psi}_l(\vv{k})$ using (\ref{Eq22a}) and 
(\ref{Eq23a}).

Equation (\ref{g:moment}) implies that the Fourier transform of the angular 
flux is given by
\be
\tilde{\psi}(\vv{k},\uv)
=\sum_{l=0}^{\infty}
\frac{2l+1}{4\pi}\tilde{\psi}_l(\vv{k})\rrf{\uvk}P_l(\mu).
\label{g:decompose}
\ee
When the above equation is rewritten using (\ref{Eq22a}), the dependence 
of $\tilde{\psi}_0$ becomes evident as
\[
\tilde{\psi}(\vv{k},\uv)
=\phi_{\hvv{k}}(z,\uv)\tilde{\psi}_0(\vv{k})-T_{\hvv{k}}(z,\uv),
\]
where
\ba
\phi_{\hvv{k}}(z,\uv)&=&
\sum_{l=0}^{\infty}\frac{2l+1}{4\pi}g_l(z)\rrf{\uvk}P_l(\mu),
\\
T_{\hvv{k}}(z,\uv)&=&
\sum_{l=0}^{\infty}\frac{2l+1}{4\pi}\chi_l(\vv{k})\rrf{\uvk}P_l(\mu).
\ea
Equations (\ref{Eq22a}), (\ref{Eq23a}), and (\ref{g:decompose}) yield
\ba
\psi(\vv{r},\uv)
&=&
\frac{1}{(2\pi)^3}\int_{\Rm^3}e^{i\vv{k}\cdot\vv{r}}
\tilde{\psi}(\vv{k},\uv)\,d\vv{k}
\\
&=&
\frac{1}{(2\pi)^3}\int_{\Rm^3}e^{i\vv{k}\cdot\vv{r}}\sum_{l=0}^{\infty}
\sqrt{\frac{2l+1}{4\pi}}\left[\rrf{\uvk}Y_{l0}(\uv)\right]
\left[g_l(z)\tilde{\psi}_0(\vv{k})-\chi_l(\vv{k})\right]\,d\vv{k}
\\
&=&
\frac{1}{(2\pi)^3}\sum_{l=0}^{\infty}\sum_{m=-l}^l\sqrt{\frac{2l+1}{4\pi}}
Y_{lm}(\uv)\int_{\Rm^3}e^{i\vv{k}\cdot\vv{r}}e^{-im\va_{\uvk}}
d_{m0}^l(\theta_{\uvk})\left[g_l(z)\tilde{\psi}_0(\vv{k})-\chi_l(\vv{k})\right]
\,d\vv{k}.
\ea
The above-mentioned expression can be written as
\bea
\psi(\vv{r},\uv)
&=&
\frac{1}{(2\pi)^2}\sum_{l=0}^{\infty}\sum_{m=-l}^lY_{lm}(\uv)i^m
\int_{-1}^1\int_0^{\infty}k^2
J_m\left(kr\sqrt{1-\mu_{\uvk}^2}\sin{\theta_{\hvv{r}}}\right)
e^{ikr\mu_{\uvk}\cos{\theta_{\hvv{r}}}}e^{-im\va_{\hvv{r}}}
\nonumber \\
&\times&
\kappa_{lm}(\vv{k})\,dkd\mu_{\uvk},
\label{mainresult0}
\eea
where
\[
\kappa_{lm}(\vv{k})=
\sqrt{\frac{2l+1}{4\pi}}d_{m0}^l(\theta_{\uvk})
\left[\frac{g_l(z)}{\Lambda(z)}\frac{z}{z-\mu_{\uvk}}-\chi_l(\vv{k})\right].
\]
To obtain (\ref{mainresult0}) we used
\ba
\int_0^{2\pi}e^{ikr\uvk\cdot\hvv{r}}\,d\va_{\uvk}
&=&
\int_0^{2\pi}
e^{ikr\sin\theta_{\uvk}\sin\theta_{\hvv{r}}\cos(\va_{\uvk}-\va_{\hvv{r}})}
e^{ikr\cos\theta_{\uvk}\cos\theta_{\hvv{r}}}\,d\va_{\uvk}
\\
&=&
e^{ikr\cos\theta_{\uvk}\cos\theta_{\hvv{r}}}
\int_0^{2\pi}e^{ikr\sin\theta_{\uvk}\sin\theta_{\hvv{r}}\cos\va_{\uvk}}
\,d\va_{\uvk}
\\
&=&
2\pi J_0\left(kr\sin\theta_{\uvk}\sin\theta_{\hvv{r}}\right)
e^{ikr\cos\theta_{\uvk}\cos\theta_{\hvv{r}}},
\ea
and
\[
J_0(x)=\frac{1}{2\pi}\int_0^{2\pi}e^{ix\cos\va}\,d\va.
\]

We can obtain the Green's function with this approach also 
in the case of anisotropic scattering \cite{Machida15b}.

\section{Concluding remarks}

Using the simple case of isotropic scattering in a three-dimensional infinite 
medium, we have seen how the angular flux is obtained with rotated reference 
frames. In one dimension, the solution by the singular-eigenfunction approach 
can be derived from the Fourier-transform approach \cite{Ganapol00,Ganapol15}. 
It is an interesting future problem to show the equivalence of 
(\ref{greenfunc}) and (\ref{mainresult0}).





\begin{thebibliography}{99}
\bibitem{Dede64}
Dede, K. M. (1964).
An explicit solution of the one velocity must-dimensional 
Boltzmann-equation in $P_N$ approximation.
{\it Nukleonik} 6: 267--271.

\bibitem{Kobayashi77}
Kobayashi, K. (1977).
Spherical harmonics solutions of multi-dimensional neutron transport equation
by finite fourier transformation.
{\it J. Nucl. Sci. Tech.} 14: 489--501.

\bibitem{Markel04}
Markel, V. A. (2004).
Modified spherical harmonics method for solving the radiative
transport equation.
{\it Waves Random Media} 14: L13--L19.

\bibitem{Panasyuk06}
Panasyuk, G., Schotland, J. C., and Markel, V. A. (2006).
Radiative transport equation in rotated reference frames.
{\it J. Phys. A: Math. Gen.} 39: 115--137.

\bibitem{Xu06a}
Xu, H. and Patterson, M. S. (2006).
Determination of the optical properties of tissue-simulating phantoms from 
interstitial frequency domain measurements of relative fluence and phase 
difference.
{\it Opt. Exp.} 14: 6485--6501.

\bibitem{Xu06b}
Xu, H. and Patterson, M. S. (2006).
Application of the modified spherical harmonics method to some problems in 
biomedical optics.
{\it Phys. Med. Biol.} 51: N247--N251.

\bibitem{Machida10}
Machida, M., Panasyuk, G., Schotland, J. C., and Markel, V. A. (2010).
The Green's function for the radiative transport equation in the slab geometry.
{\it J. Phys. A: Math. Theor.} 43: 065402.

\bibitem{LK11JPA}
Liemert, A. and Kienle, A. (2011).
Radiative transfer in two-dimensional infinitely extended scattering media.
{\it J. Phys. A: Math. Theor.} 44: 505206.

\bibitem{LK11PRA}
Liemert, A. and Kienle, A. (2011).
Analytical solution of the radiative transfer equation for infinite-space 
fluence.
{\it Phys. Rev. A} 83: 015804.

\bibitem{LK11PRE}
Liemert, A. and Kienle, A. (2011).
Analytical Green's function of the radiative transfer radiance for the 
infinite medium.
{\it Phys. Rev. E} 83: 036605.

\bibitem{LK12BOE}
Liemert, A. and Kienle, A. (2012).
Infinite space Green's function of the time-dependent radiative transfer 
equation.
{\it Biomed. Opt. Exp.} 3: 543--551.

\bibitem{LK12JOSA}
Liemert, A. and Kienle, A. (2012).
Light transport in three-dimensional semi-infinite scattering media.
{\it J. Opt. Soc. Am. A} 29: 1475--1481.

\bibitem{LK12JPA}
Liemert, A. and Kienle, A. (2012).
Green's functions for the two-dimensional radiative transfer equation in 
bounded media.
{\it J. Phys. A: Math. Theor.} 45: 175201.

\bibitem{LK12JQSRT}
Liemert, A. and Kienle, A. (2012).
Analytical approach for solving the radiative transfer equation in 
two-dimensional layered media.
{\it J. Quant. Spec. Rad. Trans.} 113: 559--564.

\bibitem{LK12PRE}
Liemert, A. and Kienle, A. (2012).
Green's function of the time-dependent radiative transport equation 
in terms of rotated spherical harmonics.
{\it Phys. Rev. E} 86: 036603.

\bibitem{LK13ANE}
Liemert, A. and Kienle, A. (2013).
The line source problem in anisotropic neutron transport with internal 
reflection.
{\it Ann. Nucl. Energy} 60: 206--209.

\bibitem{LK13SR}
Liemert, A. and Kienle, A. (2013).
Exact and efficient solution of the radiative transport equation for the 
semi-infinite medium.
{\it Scientific Reports} 3: 2018.

\bibitem{LK13WRCM}
Liemert, A. and Kienle, A. (2013).
Two-dimensional radiative transfer due to curved Dirac delta line sources.
{\it Waves in Random and Complex Media} 23: 461--474.

\bibitem{LK15AO}
Liemert, A. and Kienle, A. (2015).
A novel analytical solution for the radiance in an anisotropically scattering 
medium.
{\it Applied Optics} 54: 1963--1969.

\bibitem{Case60}
Case, K. M. (1960).
Elementary solutions of the transport equation and their applications.
{\it Ann. Phys.} 9: 1--23.

\bibitem{Mika61}
Mika, J. R. (1961).
Neutron transport with anisotropic scattering.
{\it Nucl. Sci. Eng.} 11: 415--427.

\bibitem{McCormick-Kuscer66}
McCormick, N. J. and Ku\v{s}\v{c}er, I. (1966).
Bi-Orthogonality Relations for Solving Half-Space Transport Problems.
{\it J. Math. Phys.} 7: 2036--2045.

\bibitem{Case-Zweifel}
Case, K. M. and Zweifel, P. F. (1967).
{\it Linear Transport Theory.} Addison-Wesley.

\bibitem{Machida14}
Machida, M. (2014).
Singular eigenfunctions for the three-dimensional radiative transport equation.
{\it J. Opt. Soc. Am. A} 31: 67--74.

\bibitem{Siewert78}
Siewert, C. E. (1978).
The $F_N$ method for solving radiative-transfer problems in plane geometry.
{\it Astrophys. Space Sci.} 58: 131--137.

\bibitem{Siewert-Benoist79}
Siewert, C. E. and Benoist, P. (1979).
The $F_N$ method in neutron-transport theory. Part I: Theory and applications.
{\it Nucl. Sci. Eng.} 69: 156--160

\bibitem{Machida15a}
Machida, M. (2015).
An $F_N$ method for the radiative transport equation in three dimensions.
{\it J. Phys. A: Math. Theor.} 48: 325001.

\bibitem{Machida15b}
Machida, M. arXiv:1510.00828.

\bibitem{Schotland-Markel07}
Schotland, J. C. and Markel, V. A. (2007).
Fourier-Laplace structure of the inverse scattering problem for the
radiative transport equation.
{\it Inv. Prob. Imag.} 1: 181--188.

\bibitem{MachidaXX}
Machida, M., Panasyuk, G. Y., Wang, Z.-M., Markel, V. A., and Schotland, J. C.
in preparation.

\bibitem{Machida15c}
Machida, M. arXiv:1507.07259.

\bibitem{Varshalovich}
Varshalovich, D. A., Moskalev, A. N., and Khersonskii, V. K. (1988).
{\it Quantum Theory of Angular Momentum.} World Scientific.

\bibitem{Ishimaru78}
Ishimaru, A. (1978).
{\it Wave Propagation and Scattering in Random Media.} Academic.

\bibitem{Ganapol00}
Gonapol, B. D. (2000).
A Consistent Theory of Neutral Particle Transport in an Infinite Medium.
{\it Transp. Theory Stat. Phys.} 29: 43--68.

\bibitem{Ganapol15}
Ganapol, B. D. (2015).
The Infinite Medium Green's Function of Monoenergetic Neutron Transport 
Theory via Fourier Transform.
{\it Eccl. Sci. Eng.} 180: 224--246.



\end{thebibliography}
\end{document}